\def\BibTeX{{\rm B\kern-.05em{\sc i\kern-.025em b}\kern-.08em
    T\kern-.1667em\lower.7ex\hbox{E}\kern-.125emX}}
\newtheorem{theorem}{Theorem}
\newtheorem{proposition}{Proposition}
\theoremstyle{definition}
\newtheorem{remark}{Remark}
\newcommand\blfootnote[1]{%
\hypersetup{hidelinks}
  \begingroup
  \renewcommand\thefootnote{}\footnote{#1}%
  \addtocounter{footnote}{-1}%
  \endgroup
}
\newcommand{\bl}[1]{\textcolor[rgb]{0,0,0}{#1}}
\begin{document}

\title{
%Optimality and Adversarial Channels for BD-RISs
Beyond-Diagonal RIS: Adversarial Channels and Optimality of Low-Complexity Architectures
%Beyond-Diagonal RIS: Optimality and Adversarial Channels%\\
%Adversarial Channels for Beyond-Diagonal RIS
%{\footnotesize \textsuperscript{*}Note: Sub-titles are not captured in Xplore and
%should not be used}
%\thanks{Identify applicable funding agency here. If none, delete this.}
}

% \author{Atso Iivanainen, Robin Rajamäki, and Visa Koivunen\\
% Aalto University, Department of Information and Communications Engineering, Finland}

\author{\IEEEauthorblockN{Atso Iivanainen, Robin Rajamäki, and Visa Koivunen}
\IEEEauthorblockA{\textit{Department of Information and Communications Engineering}, 
\textit{Aalto University, Finland}
%City, Country \\
%firstname.lastname@aalto.fi
}
% \and
% \IEEEauthorblockN{2\textsuperscript{nd} Given Name Surname}
% \IEEEauthorblockA{\textit{dept. name of organization (of Aff.)} \\
% \textit{name of organization (of Aff.)}\\
% City, Country \\
% email address or ORCID}
% \and
% \IEEEauthorblockN{3\textsuperscript{rd} Given Name Surname}
% \IEEEauthorblockA{\textit{dept. name of organization (of Aff.)} \\
% \textit{name of organization (of Aff.)}\\
% City, Country \\
% email address or ORCID}
}

\maketitle

\begin{abstract}
Beyond-diagonal reconfigurable intelligent surfaces (BD-RISs) have recently gained attention as %the latest 
an 
enhancement to conventional RISs. BD-RISs allow optimizing not only the phase, %shifts on the surface 
but also the amplitude responses 
%modulation through 
of their discrete surface elements by introducing adjustable 
inter-element couplings. Various BD-RIS architectures have been proposed to optimally %reach optimal 
trade off between average performance and complexity of the architecture. However, little attention %thought 
has been 
%given 
paid 
to worst-case performance. %environments 
%of these architectures. 
This paper characterizes novel sets of adversarial channels for which certain low-complexity BD-RIS architectures have suboptimal performance in terms 
%fall short of achieving 
of  
%maximum 
received signal power at an intended communications user. %at a user terminal when deploying a more complex (fully-connected) BD-RIS architecture. 
Specifically, we consider two recent BD-RIS models: the so-called group-connected and %the newly proposed 
tree-connected architecture. The derived adversarial channel sets reveal new surprising connections between the two architectures. 
% We derive channel conditions for reaching the theoretical maximum receiver power for both models. 
%We 
%We characterize the set of adversaria
%Additionally, we show how to construct adversarial channel environments 
%that effectively degrade the performance of the RIS in terms of receiver power. 
We validate our analytical results numerically, %via simulations 
demonstrating that adversarial channels can cause a significant performance loss. %when deploying such BD-RISs. %(up to 30\%) from the theoretical maximum with the current optimization algorithm. 
Our results pave the way towards efficient BD-RIS designs that are robust to adversarial propagation conditions and malicious attacks.

\blfootnote{This work was supported in part by projects Business Finland 6G-ISAC, and Research Council of Finland FUN-ISAC (359094).%, and EU Horizon INSTINCT (101139161).
}
\end{abstract}

\begin{IEEEkeywords}
% Beyond diagonal reconfigurable intelligent surface (BD-RIS), tree-connected, group-connected.
Reconfigurable intelligent surface, low-complexity architecture, worst-case performance%tree-connected, group-connected.
\end{IEEEkeywords}

\section{Introduction}

Reconfigurable intelligent surfaces (RIS) are %have been 
an emerging topic in 5G and 6G communications \cite{Chepuri_RIS, Bjornson2022ReconfigurableApplications}. %in recent years. 
The idea is to modify the propagation environment to be more favorable for communications by adding programmable reflective elements to it. %to the setting. 
The impedance of these surfaces can be controlled to high spatial granularity, enabling RISs to appropriately redirect or beamform impinging signals. 
RISs are typically passive and do not provide any additional signal amplification \cite{Bjornson2022ReconfigurableApplications}. 
In a conventional RIS architecture, the surface consists of an array of \emph{independently} adjustable phase shifters. %unit cells, which can conduct a phase shift on the signal. 
%A passive RIS where 
%the signal cannot be amplified is considered \cite{Bjornson2022ReconfigurableApplications}. 
So-called ``beyond-diagonal'' RISs (BD-RISs) \cite{nerini_global} %is one of the latest discoveries, where inter-element coupling 
also allow amplitude (in addition to phase) control by introducing adjustable \emph{coupling} between the RIS elements. However, these new BD-RIS architectures can potentially increase the complexity, and thereby cost and power losses of the RIS, as additional controllable impedances between %(possibly all pairs of) 
elements are introduced  \cite{mutual_coupling}. Hence, a key challenge is designing BD-RIS architectures striking an optimal balance between performance and complexity.

In principle, each BD-RIS element could be connected to all the others. When the number of RIS elements is $N$, this ``fully-connected'' BD-RIS architecture would therefore have on the order of $N^2$ connections. Recently, the so-called tree-connected (TC) %BD-RIS 
architecture was proposed \cite{BDRIS_graphtheory} to reduce the number of connections to the order of $N$, while improving performance compared to conventional (single-connected) RISs %achieving optimal received signal power at a single-user terminal 
under %performance (in terms of received signal power) in 
typical propagation conditions. %with minimal structural complexity (on the order of $N$ inter-element connections). It was suggested that the architecture can reach the performance of a fully-connected (FC) architecture with regards to received signal power reducing the complexity by a factor of $N$, the number of unit cells in the RIS aperture. 
%While this is true for real-world channels occurring in nature, 
In particular, the TC architecture was shown to achieve the 
%received signal power 
performance 
of the fully-connected architecture over a wide range of stochastic channel models % typically encountered in nature 
\cite{BDRIS_graphtheory}. 
However, this still leaves open the possibility of \emph{adversarial} channels for which low-complexity BD-RIS architectures could experience significant performance degradation. 
Investigating such cases is especially important 
%such considerations are important 
when the propagation environment is susceptible to %can also 
malicious %hybrid 
attacks %and manipulation 
aiming to disrupt the intended use of the RIS \cite{malris_wang, malrisbjörnson}, 
\bl{or coarse quantization is employed at the RIS, which may lead to a nonzero probability of an adversarial channel estimate}. %Hence, %the existence of 
Hence, studying adversarial channels can help %is needed to 
design systems %low-complexity BD-RIS architectures 
that are robust to attacks and adversarial propagation conditions.

%In this paper, we show by analytical methods that there are adversarial channel realizations, with which the TC architecture cannot reach the theoretical maximum received power. The result derived in this paper contradicts the claims in \cite{BDRIS_graphtheory}. The numerical results indicate that with the current optimization algorithm the received signal power may be decreased up to 30\%. We also derive an analytic form for adversarial channels for a group-connected (GC) architecture, which is widely discussed in the literature \cite{Li2023BeyondArchitectures, nerini_global, mutual_coupling}. Furthermore, we show that the TC-adversarial channels are a special case of the GC-adversarial channels. 

This paper characterizes novel adversarial channels in case of two BD-RIS architectures that have recently received significant attention in the literature: the group-connected (GC) and tridiagonal TC architectures \cite{Li2023BeyondArchitectures, nerini_global, mutual_coupling,BDRIS_graphtheory}. We show that when the channels belong to certain unfavorable (and uncountably infinite) sets, the respective BD-RIS architectures fall short of achieving optimal performance in terms of maximizing the received power at an intended communications user. 
%This set is completely characterized in case of the group-connected architecture. 
Our results also reveal a surprising connection between the GC and tridiagonal TC architectures: adversarial channels for the latter comprise of adversarial channels for diverse GC architectures. Our numerical results indicate that the received signal power may be decreased significantly for such adversarial channels. 
%We also demonstrate that adversarial channels do not require jointly generating both transmitter-to-RIS and RIS-to-receiver channel vectors; an adversary would need knowledge of only one (which can be arbitrary) and the capability to influence the other adversarially to disrupt the TC BD-RIS.

%This paper is organized as follows: In Section \ref{sec: back} we present the model and give mathematical forms for different RIS-architectures. In Section \ref{sec: new_results} we derive the channel constraints for optimality of GC and TC architectures, which we verify with numerical results in Section \ref{sec: sims}. Finally we conclude the study in Section \ref{sec: conc}. 

\section{Background}
\label{sec: back}

\subsection{System model and problem formulation}
\label{sec: sys-model}

We consider a %single-input single-output (SISO) RIS-system 
system model consisting of an RIS, and a single-antenna transmitter (Tx) and receiver (Rx). In a typical downlink communications scenario the Tx is a base station and the Rx a %transmitting signal to 
user device. The received signal is modeld as:% at UE is: %$y(t) \in \mathbb{C}$ is transmitted signal $x(t) \in \mathbb{C}$ propagating over scalar channel $h$ and contaminated by additive receiver noise $n(t)$: 
\begin{equation}
    y(t) = hx(t) + n(t),
\end{equation}
where $x(t) \in \mathbb{C}$ is the transmitted signal, $h\in\mathbb{C}$ the channel, and $n(t)$ receiver noise. 
%\noindent 
Given an RIS with $N$ unit cells, we can model the scalar channel between the Tx and the Rx as: 

\begin{equation}
    h = h_d + \mathbf{h}_R^H \mathbf{\Theta}\mathbf{h}_T, 
\end{equation}

\noindent where $\mathbf{h}_T \in \mathbb{C}^N$ is the vector channel between the Tx and the RIS, $\mathbf{\Theta} \in \mathbb{C}^{N \times N}$ is a so-called "scattering matrix" describing the effect of the RIS on the signal, and $\mathbf{h}_R \in \mathbb{C}^N$ is the channel between the RIS and the Rx. The direct channel between the Tx and the Rx is denoted by $h_d \in \mathbb{C}$. The RIS is typically a planar surface with $N$ independently adjustable elements, each of which can perform a phase shift to the signal. 
A BD-RIS can also change the element amplitudes in addition to phase shifts. 
%With BD-RISs, also amplitude modulation is possible. 
To analyze the RIS performance, we assume that $h_d=0$ which corresponds to the scenarios of high practical relevance where the direct line of sight path is blocked \cite{Bjornson2022ReconfigurableApplications}. 
\bl{The case of a nonzero (but small) $h_d$ is left for future work.}

Now the received signal power $P_R$ at Rx is defined as \cite{nerini_global}: 
\begin{equation}
    \label{eq: obj-function}
    P_R \triangleq P_T |\mathbf{h}_R^H \mathbf{\Theta} \mathbf{h}_T|^2,
\end{equation}
\noindent where $P_T$ is the transmitted signal power. Without loss of generality, we henceforth assume $P_T = 1$ to simplify the notation. Assuming a passive and lossless RIS, $\mathbf{\Theta}$ is a symmetric unitary matrix defined as \cite{micro_eng, nerini_global, graph_theory}: 

\begin{equation}
    \mathbf{\Theta} \triangleq (\mathbf{I} + jZ_0 \mathbf{B})^{-1}(\mathbf{I} - jZ_0\mathbf{B}), 
    \label{eq: theta}
\end{equation}
where $\mathbf{B}\in \mathbb{R}^{N \times N}$ is the so-called susceptance matrix of the RIS. 
% Susceptance is the imaginary component of admittance  $\mathbf{Y}$, which is the reciprocal of impedance $\mathbf{Y} = \mathbf{Z}^{-1}$. In a passive and lossless RIS, $\mathbf{Z}$ is symmetric and purely imaginary, and can be expressed as: $\mathbf{Z} = j(\mathbf{B}^{-1})$.
For a passive and lossless network, assumed here, $\mathbf{B}$ is symmetric, i.e., $\mathbf{B} = \mathbf{B}^T$ \cite{micro_eng}.

Our objective is to find $\mathbf{\Theta}$ s.t. the received signal power is maximized \cite{BDRIS_graphtheory} (assuming $P_T = 1$):
\begin{align}
    &\underset{\mathbf{B} \in \mathcal{B} \subseteq \mathbb{R}^{N \times N}}{\text{maximize}} \ %P_T 
    |\mathbf{h}_R^H \mathbf{\Theta} \mathbf{h}_T|^2  \label{eq: max} \\
    &\text{s.t.} \ \mathbf{\Theta} = (\mathbf{I} + jZ_0 \mathbf{B})^{-1}(\mathbf{I} - jZ_0\mathbf{B}) \text{ and } \mathbf{B}=\mathbf{B}^T,   \notag 
\end{align}
where $\mathbf{h}_T\in \mathbb{C}^{N}$ and $\mathbf{h}_R \in \mathbb{C}^{N}$ are the channels from Tx to RIS and from RIS to Rx, respectively, assumed to be known. Set $\mathcal{B}$ is a constraint set on optimization variable $\mathbf{B} \in \mathbb{R}^{N \times N}$, determined by the RIS-architecture at hand \cite{mutual_coupling}.

An upper bound on $P_R$, denoted by $\bar{P}_R$, is obtained using the Cauchy-Schwartz inequality and the norm preserving property of unitary transforms:
\begin{align}
    \label{eq: bound_full}
    \bar{P}_R &\triangleq  ||\mathbf{h}_R||_2^2||\mathbf{h}_T||_2^2  \\
    &= ||\mathbf{h}_R||_2^2 ||\mathbf{\Theta} \mathbf{h}_T||_2^2 \geq |\mathbf{h}_R^H \mathbf{\Theta} \mathbf{h}_T|^2 = P_R.  \notag  
\end{align}
Section \ref{sec: new_results} establishes %conditions
%the constraints for 
for which 
channel vectors $(\mathbf{h}_R,\mathbf{h}_R)$ %for reaching
the solution to \eqref{eq: max} achieves \eqref{eq: bound_full} \bl{for various} %given certain 
BD-RIS architectures. 

\subsection{Mathematical representations of different RIS-architectures}
\label{subsec: architectures}

The structure of $\mathbf{B}$ depends on the RIS architecture. BD-RIS models are conventionally divided into single, group and fully-connected architectures \cite{mutual_coupling}. Informally, 
%Following the definitions in \cite{BDRIS_graphtheory}, 
a \emph{connected} 
%RIS is an 
architecture 
is one 
where a path can be found between any two elements \cite{BDRIS_graphtheory}. This paper considers the single, group, and recently proposed tree-connected \cite{BDRIS_graphtheory} architecture.

A \emph{single-connected} architecture is equivalent to a conventional RIS, which is constructed so that each unit cell is controlled through an independent admittance circuit that alters the phase shift of the reflected signal. As there are no connections between the elements, the network is modeled by a diagonal $N \times N$ susceptance matrix $\mathbf{B} = \text{diag}(b_1, b_2,...,b_N)$.

A \emph{fully-connected} architecture implements a single impedance circuit, where each element is directly linked to one another. Consequently, the susceptance matrix only has non-zero entries, i.e, $B_{i,\ell}\neq0,\forall i,\ell$. 
%, as there is an admittance component from each element to all the other elements. %For a passive and lossless network, the matrix is symmetric $\mathbf{B} = \mathbf{B}^T$ \cite{micro_eng}.

In a \emph{group-connected} architecture, the network consists of mutually disconnected groups, such that %. In this case, 
$\mathbf{B}$ %the susceptance matrix 
is \emph{block-diagonal}: %where each diagonal block is a connected matrix: 
\begin{equation}
    \label{eq: block_diagonal}
    \mathbf{B} = \text{diag}(\mathbf{B}_1, \mathbf{B}_2, ... , \mathbf{B}_G). 
\end{equation}
Here, $\mathbf{B}_g=\mathbf{B}_g^T$ is a (symmetric) connected $N_g \times N_g$ submatrix, $G$ is the number of groups, $g$ is the group index, and $N_g$ the number of elements in group $g$.

A \emph{tree-connected} architecture is a connected architecture with the minimum number of $N-1$ connections, such that there is a path between each element, albeit not necessarily a direct one. An illustrative example of this kind of design is a tridiagonal RIS, in which each cell element is connected to its two nearest neighbors, as depicted %. The block diagram is shown 
in Figure \ref{fig: tree-connected}. In this case, the susceptance matrix %of such a network 
is \emph{tridiagonal} (and symmetric):%matrix: 

\begin{equation}
\label{eq: tridiagonal}
\mathbf{B} = \begin{bmatrix}
    B_{1,1} & B_{1,2} & \dots & 0\\
    B_{1,2} & B_{2,2} &  \ddots & \vdots\\
    \vdots  & \ddots & \ddots & B_{N-1,N} \\
    0  & \dots & B_{N-1,N} & B_{N,N} 
\end{bmatrix}.
\end{equation}
There are multiple other tree-connected BD-RIS architectures beyond the tridiagonal case considered in this paper. 
%\noindent Although the tridiagonal architecture is considered in this paper, there are multiple other tree-connected BD-RIS designs. % that satisfy the condition that an RIS is tree-connected. 
% According to \cite{BDRIS_graphtheory}, any tree-connected architecture can reach \eqref{eq: bound_full} in any channel environment. However, in the following section we will show the existence of channels for which the tridiagonal RIS cannot reach the upper bound in \eqref{eq: bound_full}. 
Indeed, it was shown in \cite{BDRIS_graphtheory} that any tree-connected architecture can reach the received power upper bound \eqref{eq: bound_full} with probability $1$ when the entries of channel vectors $(\mathbf{h}_R,\mathbf{h}_T)$ are drawn independently from a continuous probability distribution. %drawn from a from a continuous probability distribution, then $\mathbf{b} \in \mathcal{R}(\mathbf{A})$ with probability 1. 
However, the question remains: do \emph{adversarial} channels exist for which \eqref{eq: bound_full} \emph{cannot} be achieved? In the next section, we answer this question affirmatively, characterizing a set of channels for which the tridiagonal RIS cannot achieve \eqref{eq: bound_full}. Interestingly, these adversarial channels are intimately related to adversarial channels of certain group-connected BD-RISs, as we show. 
%\vspace{-.4cm}
\begin{figure}[h]
    \centering
\includegraphics[width=80mm,trim={0cm 0 0 2.5cm},clip]{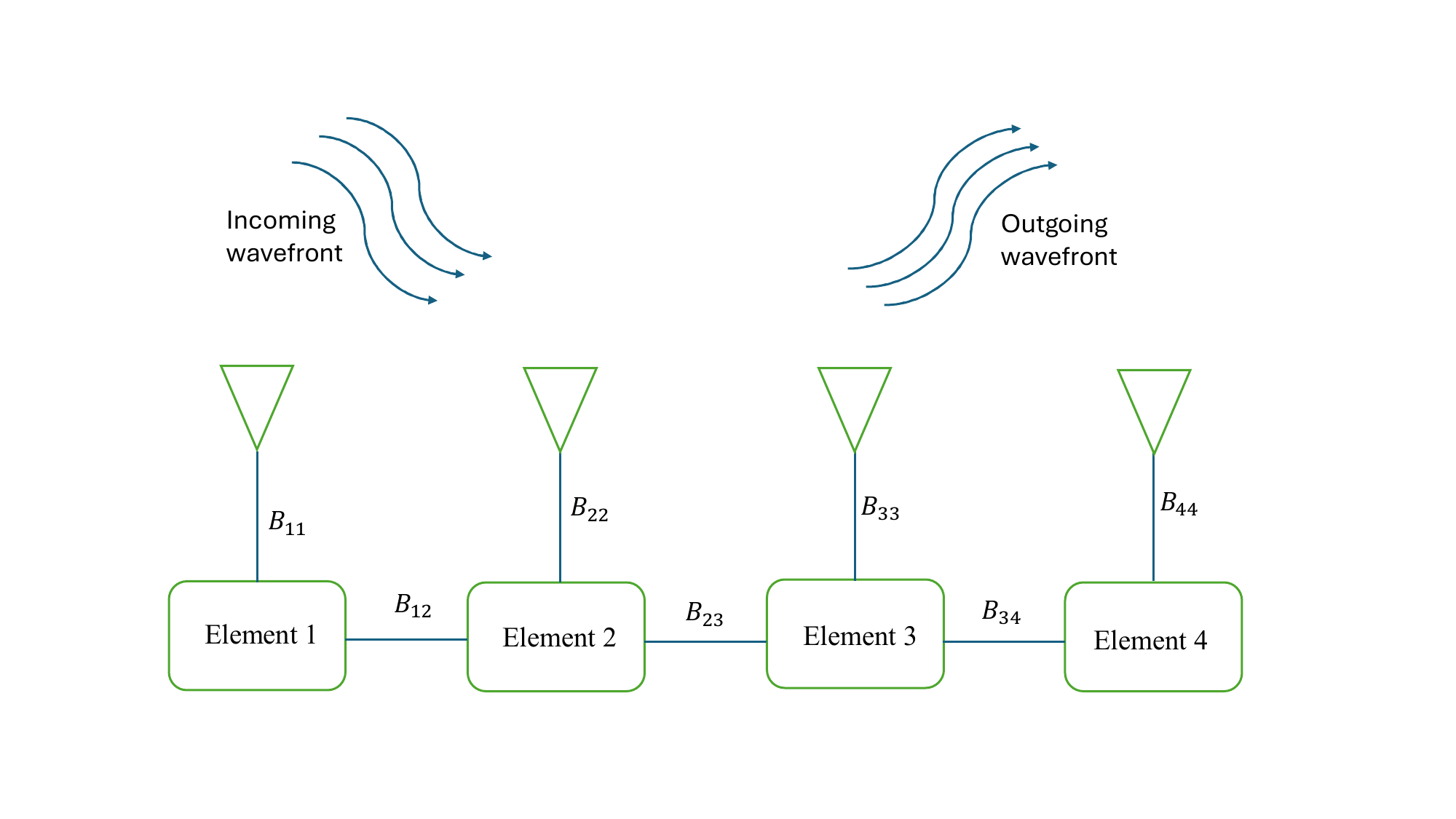}
    \vspace{-.5cm}
    \caption{%A 
    Tridiagonal BD-RIS architecture}
    \label{fig: tree-connected}
\end{figure}
%\vspace{-.3cm}

\begin{comment}
 This optimization problem needs to be posed up front in the very beginning of the section (or even background). There you can say that the set B will depend on the chosen architecture (give the typical examples of diagonal, tridiagonal and fully connected). Then in each subsection you will mathematically specify what the set is in that case.
\end{comment}
%\vspace{-.5cm}
\section{Adversarial channels for BD-RISs}
\label{sec: new_results}

The group and tree-connected architectures have been shown to improve received signal power $P_R$ compared to the conventional single-connected RIS. Indeed, the upper bound $\bar{P}_R$ in \eqref{eq: bound_full} can be achieved for a wide variety of channels \cite{BDRIS_graphtheory, nerini_global}. 
% Yet, the channel constraints for reaching the upper bound have not been studied, invoking the question: 
% %Yet, neither of them can reach \textcolor{red}{upper bound} $\bar{P}_R$ 
% %\textcolor{red}{for all channels \textbf{[CITE]}.}
% %in all channel environments.
% % This invokes the question: 
% for which channels can these architectures reach $\bar{P}_R$?
%what are the channel constraints for reaching $\bar{P}_R$ with these architectures? 
In this section, we characterize for which channels the maximum receiver power $\bar{P}_R$ can, respectively cannot, be achieved with the group-connected and tridiagonal architectures. 

\subsection{Group connected BD-RIS architecture}
\label{sec: disconnected}

The group-connected architecture has a limited ability to achieve $\bar{P}_R$, i.e., %Thus, w
%We will now investigate these channel constraints by 
the optimal solution to %optimizing 
\eqref{eq: max} can fall short of \eqref{eq: bound_full} 
when $\mathbf{B}$ is of the form in \eqref{eq: block_diagonal}. %, assuming that $G$ is the number of groups and $g$ is the group index. 
% Note by (\ref{eq: theta}) that if $\mathbf{B}$ is block-diagonal, then so is $\mathbf{\Theta}$. Furthermore, the unitarity of $\mathbf{\Theta}$ implies that its submatrices $\mathbf{\Theta}_g$ are also unitary. 
%
\begin{comment}
\textcolor{red}{\textbf{[Notation $\mathbf{h}_{R,g}$ etc. needs to be defined. I would suggest associating a group connected architecture with an index set $\mathcal{I}$ or $\{\mathcal{J}_g\}_{g=1}^G$. You also need to explicitly mention the used indexing/ordering convention and remark on why it can be assumed w.l.o.g. (trivial)]}} 
\end{comment}
We associate a specific group-connected architecture with an index set $\mathcal{I}=\{i_g\}_{g=1}^{G-1}$, which is a fixed subset of $G-1$ positive integers smaller than $N$, such that $1 \leq i_1<i_2< \ldots<i_{G-1}< N$. We refer to this as an \emph{$\mathcal{I}$-group-connected} architecture. We denote the indices of the elements belonging to the $g$th group as $\mathcal{J}_g(\mathcal{I})$, where
% Set $\mathcal{I}$ contains the indices of the disconnected elements of the considered block-diagonal $\mathbf{B}$. To cover all the permutations of a group-connected model, we assume that the group sizes are in descending order, so that the group between indices $1$ and $i_1$ is the largest group. Without loss of generality, we can divide each channel vectors $\mathbf{v}$ into disjoint subvectors $\mathbf{v}_{\mathcal{J}_g}$ s.t.:
\[
    \mathcal{J}_g(\mathcal{I}) 
    \triangleq
    \begin{cases}
        \{1 : i_1\}, & \text{if } g = 1, \\
        \{(i_{g-1}+1) : i_g\}, & \text{if } 1 < g < G, \\
        \{(i_{G-1}+1) : N\}, & \text{if } g = G. 
    \end{cases}
\]
To eliminate trivial permutations, we have assumed that mutually connected BD-RIS elements take on adjacent indices, and that the $G$ groups are in nonincreasing order of cardinality, i.e., $|\mathcal{J}_1(\mathcal{I})|\geq |\mathcal{J}_2(\mathcal{I})|\geq \ldots \geq  |\mathcal{J}_G(\mathcal{I})|$. %Without loss of generality, we can divide each channel vectors $\mathbf{v}$ into disjoint subvectors $\mathbf{v}_{\mathcal{J}_g}$ s.t.:
We can now express the received power in \eqref{eq: obj-function} as (assuming $P_T = 1$ as before): %Using this notation, express the received power as: 
\begin{equation}
    P_R^{GC} = \Bigg| \sum_{g=1}^G [\mathbf{h}_R]_{\mathcal{J}_g(\mathcal{I})}^H \mathbf{\Theta}_g [\mathbf{h}_T]_{\mathcal{J}_g(\mathcal{I})}\Bigg|^2. 
\end{equation}
Using the Cauchy-Schwartz inequality, we can define a tight upper bound $\bar{P}_R^{GC} \leq \bar{P}_R$ on $P_R^{GC}$, which is achieved when each $\mathbf{B}_g$ is optimized individually: 

\begin{align}
    \label{eq: bound_disconnected}
    \bar{P}_R^{GC} &\triangleq \left ( \sum_{g=1}^G ||[\mathbf{h}_R]_{\mathcal{J}_g(\mathcal{I})}||_2 \ ||[\mathbf{h}_T]_{\mathcal{J}_g(\mathcal{I})}||_2 \right )^2  \geq P_R^{GC}. 
\end{align}
% \noindent Let us then formulate a proposition, which gives a sufficient and necessary condition for bounds \eqref{eq: bound_disconnected} and \eqref{eq: bound_full} to coincide. Denote a set of vectors $(\mathbf{u},\mathbf{v})$, for which each subvector given by index set $\mathcal{J}_g$ is of the same norm, as: 
We now provide necessary and sufficient conditions on channel vectors $(\mathbf{h}_R,\mathbf{h}_T)$ for \eqref{eq: bound_disconnected} and \eqref{eq: bound_full} to coincide. To this end, let
\begin{align}
     \label{eq: C1}
     \begin{aligned}
        \mathcal{C}_1(\mathcal{I})\!\triangleq\!\{ (\mathbf{u},\mathbf{v})%\!\in\!\mathbb{C}^N\!\times \!\mathbb{C}^N
        \, | \, \nexists \gamma>0 \text{ s.t. } \|&\mathbf{u}_{\mathcal{J}_g(\mathcal{I})}\|_2\!=\!\gamma \|\mathbf{v}_{\mathcal{J}_g(\mathcal{I})}\|_2\\
        &\text{ for all } g\in\{1:G\}\}.
     \end{aligned}  
\end{align}
%characterize the set of $(\mathbf{u},\mathbf{v})\in\mathbb{C}^N\times \mathbb{C}^N$ tuples where each subvector of $\mathbf{u}$ and $\mathbf{v}$ corresponding to index sets $\{\mathcal{J}_g(\mathcal{I})\}_g$ have norms that are \emph{not} proportional to each other. 
Set $\mathcal{C}_1(\mathcal{I})$ denotes the adversarial channel vectors in case of the $\mathcal{I}$-group-connected architecture.% corresponding to $\mathcal{I}$.

\begin{comment}
\textcolor{red}{\textbf{[Two suggestions for Proposition 1: Firstly, use the index set suggested before, saying ``given a group connected architecture defined by\ldots''. Secondly, make the statement consistent with Theorem 1 by stating when the upper bound \emph{cannot} be achieved. Define an adversarial set that you can reuse in the tree-connected case. For clarity, you can still mention after the proposition that it implies that the upper bound is achieved when\ldots]}}
%\bigskip
\end{comment}

\begin{proposition}
    \label{proposition: disconnected}
    
    Given an $\mathcal{I}$-group-connected BD-RIS architecture %corresponding to index set $\mathcal{I}$, 
    where $\mathbf{B}$ follows \eqref{eq: block_diagonal}, denoted $\mathbf{B} \in \mathcal{B}_{GC}(\mathcal{I})$, %and determined by index set $\mathcal{I}$, 
    we have: 
    \begin{align} 
        (\mathbf{h}_R, \mathbf{h}_T) \in C_1(\mathcal{I}) \iff
        \underset{\mathbf{B} \in \mathcal{B}_{GC}(\mathcal{I})}{\max} \ |\mathbf{h}_R^H \mathbf{\Theta} \mathbf{h}_T|^2 < \bar{P}_R .
        % ||\mathbf{h}_{R,\mathcal{J}g}||_2 &= \gamma ||\mathbf{h}_{T,g}||_2, \ \forall g.%, \gamma > 0 
        \label{eq: GC_condition}
    \end{align}
\end{proposition}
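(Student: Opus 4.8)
The plan is to separate the argument into two parts: first, to show that the maximum on the right-hand side of \eqref{eq: GC_condition} equals the closed-form bound $\bar{P}_R^{GC}$ of \eqref{eq: bound_disconnected}; and second, to recognize the resulting equivalence $\bar{P}_R^{GC}<\bar{P}_R \iff (\mathbf{h}_R,\mathbf{h}_T)\in\mathcal{C}_1(\mathcal{I})$ as exactly the equality condition of the Cauchy--Schwarz inequality. For the first part, abbreviate $a_g\triangleq\|[\mathbf{h}_R]_{\mathcal{J}_g(\mathcal{I})}\|_2$, $b_g\triangleq\|[\mathbf{h}_T]_{\mathcal{J}_g(\mathcal{I})}\|_2$, and $c_g\triangleq[\mathbf{h}_R]_{\mathcal{J}_g(\mathcal{I})}^H\mathbf{\Theta}_g[\mathbf{h}_T]_{\mathcal{J}_g(\mathcal{I})}$, so that $P_R^{GC}=|\sum_{g=1}^G c_g|^2$. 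The triangle inequality together with the per-group Cauchy--Schwarz inequality (each $\mathbf{\Theta}_g$ being norm preserving) gives $|\sum_g c_g|\le\sum_g|c_g|\le\sum_g a_g b_g$, hence $P_R^{GC}\le\bar{P}_R^{GC}$. The reverse (achievability) direction uses that each group is internally fully-connected, so there is a realizable symmetric unitary $\mathbf{\Theta}_g$ mapping $[\mathbf{h}_T]_{\mathcal{J}_g(\mathcal{I})}$ to any vector of equal norm; choosing each target as a nonnegative multiple of $[\mathbf{h}_R]_{\mathcal{J}_g(\mathcal{I})}$ makes every $c_g=a_g b_g$ real and nonnegative, so the bound is attained and $\max_{\mathbf{B}\in\mathcal{B}_{GC}(\mathcal{I})}|\mathbf{h}_R^H\mathbf{\Theta}\mathbf{h}_T|^2=\bar{P}_R^{GC}$.

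Since $\bar{P}_R^{GC}\le\bar{P}_R$ always holds, the strict inequality in \eqref{eq: GC_condition} is equivalent to $\bar{P}_R^{GC}<\bar{P}_R$. Expanding both sides via \eqref{eq: bound_full} and \eqref{eq: bound_disconnected} in terms of the nonnegative $G$-vectors $\mathbf{a}=(a_1,\dots,a_G)$ and $\mathbf{b}=(b_1,\dots,b_G)$ gives
\[
\bar{P}_R=\Big(\textstyle\sum_{g=1}^G a_g^2\Big)\Big(\textstyle\sum_{g=1}^G b_g^2\Big),\qquad \bar{P}_R^{GC}=\Big(\textstyle\sum_{g=1}^G a_g b_g\Big)^2,
\]
so that $\bar{P}_R-\bar{P}_R^{GC}=\|\mathbf{a}\|_2^2\|\mathbf{b}\|_2^2-(\sum_g a_g b_g)^2\ge0$, e.g.\ via Lagrange's identity $\sum_{g<h}(a_g b_h-a_h b_g)^2$.

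It then remains to read off the equality case: $\bar{P}_R^{GC}=\bar{P}_R$ holds iff $\mathbf{a}$ and $\mathbf{b}$ are linearly dependent, i.e.\ iff some scalar $\gamma$ satisfies $a_g=\gamma b_g$ for all $g$. Because $a_g,b_g\ge0$ and the channels are nonzero (so some $a_g>0$ and some $b_g>0$), this scalar must be strictly positive, matching the quantifier $\exists\,\gamma>0$ in \eqref{eq: C1}; conversely any such $\gamma>0$ yields equality. Hence $\bar{P}_R^{GC}<\bar{P}_R$ precisely when no $\gamma>0$ achieves $a_g=\gamma b_g$ for all $g$, which is the defining membership condition of $\mathcal{C}_1(\mathcal{I})$, and the stated equivalence follows. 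I expect the main obstacle to lie entirely in the achievability step of the first paragraph, namely confirming that the Cayley-parametrized, real-symmetric-$\mathbf{B}_g$ scattering matrices are expressive enough to simultaneously attain each per-group Cauchy--Schwarz bound \emph{and} phase-align the $G$ contributions for fully constructive combining; the remaining Cauchy--Schwarz equality analysis is routine.
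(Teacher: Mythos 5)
Your proof is correct and follows essentially the same route as the paper: the heart of both arguments is applying the Cauchy--Schwarz equality condition to the nonnegative $G$-vectors of per-group channel norms, so that $\bar{P}_R^{GC}<\bar{P}_R$ exactly when no $\gamma>0$ aligns them, which is membership in $\mathcal{C}_1(\mathcal{I})$. The achievability step you flag as the main obstacle---that $\max_{\mathbf{B}\in\mathcal{B}_{GC}(\mathcal{I})}|\mathbf{h}_R^H\mathbf{\Theta}\mathbf{h}_T|^2=\bar{P}_R^{GC}$, i.e., that each Cayley-parametrized symmetric unitary $\mathbf{\Theta}_g$ can rotate $[\mathbf{h}_T]_{\mathcal{J}_g(\mathcal{I})}$ onto a nonnegative multiple of $[\mathbf{h}_R]_{\mathcal{J}_g(\mathcal{I})}$---is not proved in the paper either; it is asserted in the text preceding \eqref{eq: bound_disconnected} as a known property of fully-connected groups from the cited BD-RIS literature, so your treatment is, if anything, slightly more complete.
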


\begin{proof}
%Sufficiency follows directly by substituting \eqref{eq: GC_condition} and consequently \eqref{eq: C1} into \eqref{eq: bound_disconnected} and \eqref{eq: bound_full}, and verifying inequality. 
By \labelcref{eq: bound_disconnected,eq: bound_full}, $\bar{P}_R^{GC} \leq \bar{P}_R$ can be written as
\begin{align}
    \label{eq: cauchy-schwartz}
    &\left ( \sum_{g=1}^G ||[\mathbf{h}_R]_{\mathcal{J}_g(\mathcal{I})}||_2 \ ||[\mathbf{h}_T]_{\mathcal{J}_g(\mathcal{I})}||_2 \right )^2 \leq \notag \\
    &\left (\sum_{g=1}^G ||[\mathbf{h}_R]_{\mathcal{J}_g(\mathcal{I})}||_2^2 \right ) \left ( \sum_{g=1}^G ||[\mathbf{h}_T]_{\mathcal{J}_g(\mathcal{I})}||_2^2 \right ). 
\end{align}
By the Cauchy-Schwartz inequality, equality holds in \eqref{eq: cauchy-schwartz}, i.e., $\bar{P}_R^{GC} = \bar{P}_R$, if and only if $ \|[\mathbf{h}_R]_{\mathcal{J}_g(\mathcal{I})}\|_2 = \gamma  \|[\mathbf{h}_R]_{\mathcal{J}_g(\mathcal{I})}\|_2, \forall g$ and some $\gamma > 0$, i.e., $(\mathbf{h}_R,\mathbf{h}_T)\notin\mathcal{C}_1(\mathcal{I})$. Hence, $\bar{P}_R^{GC} \neq \bar{P}_R$, i.e., $\bar{P}_R^{GC} < \bar{P}_R$, if and only if $(\mathbf{h}_R,\mathbf{h}_T)\in\mathcal{C}_1(\mathcal{I})$.
% For the forward direction, let us assume that $(\mathbf{h}_R, \mathbf{h}_T) \in C_1(\mathcal{I})$. We now have to show that $\bar{P}_R^{GC}$ in \eqref{eq: bound_disconnected} is strictly less than $\bar{P}_R$ in \eqref{eq: bound_full}. Cauchy-Schwartz inequality states: 
%
% \begin{align}
%     \label{eq: cauchy-schwartz}
%     &\left ( \sum_{g=1}^G ||[\mathbf{h}_R]_{\mathcal{J}_g}||_2 \ ||[\mathbf{h}_T]_{\mathcal{J}_g}||_2 \right )^2 \leq \notag \\
%     &\left (\sum_{g=1}^G ||[\mathbf{h}_R]_{\mathcal{J}_g}||_2^2 \right ) \left ( \sum_{g=1}^G ||[\mathbf{h}_T]_{\mathcal{J}_g}||_2^2 \right ), 
% \end{align}
% with equality holding when $(\mathbf{h}_R, \mathbf{h}_T) \notin C_1(\mathcal{I})$. Since (\ref{eq: cauchy-schwartz}) is equal to $\bar{P}_R^{GC} \leq \bar{P}_R$, this proves the forward direction. 
% For the converse, we assume the contraposition $\underset{\mathbf{B} \in \mathcal{B}_{GC}}{\max} \ |\mathbf{h}_R^H \mathbf{\Theta} \mathbf{h}_T|^2 = \bar{P}_R$. According to Cauchy-Schwartz, this equality only holds if $\exists \gamma >0 \ \text{s.t.} \ [\mathbf{h}_R]_{\mathcal{J}_g} = \gamma \cdot [\mathbf{h}_R]_{\mathcal{J}_g} \ \forall g$. Since this is contradictory to $(\mathbf{h}_R, \mathbf{h}_T) \in C_1(\mathcal{I})$, the converse has to hold and we have proven Proposition \ref{proposition: disconnected}. 
\end{proof}

\begin{comment}
 Using the fact that $||\mathbf{h}||^2 = \sum_{g=1}^G ||\mathbf{h}_g||^2$ and Cauchy-Schwartz inequality we can express $\bar{P}_R^{GC}$ and $\bar{P}_R$ as: 
 
\begin{align}
    \bar{P}_R = P_T ||\mathbf{h}_R||^2  ||\mathbf{h}_T||^2 &= P_T \left (\sum_{g=1}^G ||\mathbf{h}_{R,g}||^2 \right ) \left ( \sum_{g=1}^G ||\mathbf{h}_{T,g}||^2 \right ).
\end{align}

Now, Cauchy-Schwartz inequality states:  

\begin{equation}
    \label{eq: cauchy-schwartz}
    \left ( \sum_{g=1}^G ||\mathbf{h}_{R,g}|| \ ||\mathbf{h}_{T,g}|| \right )^2 \leq \left (\sum_{g=1}^G ||\mathbf{h}_{R,g}||^2 \right ) \left ( \sum_{g=1}^G ||\mathbf{h}_{T,g}||^2 \right ), 
\end{equation}
\end{comment}

\begin{remark}
    \cref{proposition: disconnected} implies that the $\mathcal{I}$-group-connected architecture %associated with index set  
    achieves upper bound $\bar{P}_R$ if and only if% $\exists \gamma>0$ s.t.% $(\mathbf{h}_R, \mathbf{h}_T) \notin C_1(\mathcal{I})$. 
    \begin{align}
        \label{eq: gc_reaches_upper_bound}
        \exists \gamma>0 \text{ s.t. } \|%&
        [\mathbf{h}_R]_{\mathcal{J}_g(\mathcal{I})}\|_2\!=\!\gamma \|[\mathbf{h}_T]_{\mathcal{J}_g(\mathcal{I})}\|_2, \forall g.
        %\\
        %&\text{ for all } g\in\{1:G\}\} \notag
    \end{align}
\end{remark}

\begin{remark}
    If $G = N$, each group only consists of one element, reducing the architecture to a single-connected (conventional) RIS. In this case, $\bar{P}_R$ is achieved if and only if 
    \begin{equation}
        \label{eq: sc_reaches_upper_bound}
        \exists \gamma > 0 \  \text{s.t.} \ |[\mathbf{h}_R]_n| = \gamma |[\mathbf{h}_T]_n|, \forall n.
    \end{equation}
    Most notably, \eqref{eq: sc_reaches_upper_bound} holds for line-of-sight channels where $[\mathbf{h}_R]_n\!=\!c_1 e^{j\phi_n},[\mathbf{h}_T]_n\!=\!c_2 e^{j\varphi_n}$ for some $c_1,c_2,\phi_n,\varphi_n\in\mathbb{R}$.
 %   This special case of a group-connected gives the condition for a conventional RIS to reach (\ref{eq: bound_full}). 
\end{remark}

\subsection{Tree-connected BD-RIS architecture}

The tree-connected architecture also has adversarial channels, 
%the constraints of 
which we will now 
%derive. 
characterize. This is the main novel contribution of the paper. 
Our starting point is the following result from \cite{BDRIS_graphtheory}: an optimal $\mathbf{B}$ solving \eqref{eq: max} in case of the tridiagonal architecture 
%\ldots\textbf{[be precise here what you mean by optimal]}\ldots 
can equivalently be written as the solution to a certain linear system of equations $\mathbf{A}\mathbf{x} = \mathbf{b}$. Here, $\mathbf{x}\in\mathbb{R}^{2N-1}$ contains the \emph{unknown} entries of %susceptance matrix 
$\mathbf{B}$,
whereas $\mathbf{A} \in \mathbb{R}^{2N \times (2N - 1)}$ and $\mathbf{b}\in \mathbb{R}^{2N}$ are functions of the \emph{known} channel vectors $\mathbf{h}_R,\mathbf{h}_T$. Specifically,
\begin{align}
    \label{eq: xB}
    \begin{aligned}
    \mathbf{x} &\triangleq \left [ \mathbf{B}_{1,1}, ... , \mathbf{B}_{N,N}, \mathbf{B}_{1,2},...,\mathbf{B}_{N-1,N} \right]^T,
    \end{aligned}\\
    \label{eq: A_matrix}
    \mathbf{A}\!\triangleq\!\left [\begin{array}{cc}
    \text{Re}\{\mathbf{A}_1\} & \text{Re}\{\mathbf{A}_2\}\\
    \hdashline
    \text{Im}\{\mathbf{A}_1\} & \text{Im}\{\mathbf{A}_2\}
    \end{array}
    \right], \mathbf{b}\!\triangleq\!\left [\begin{array}{c}
    \text{Re}\{\hat{\mathbf{h}}_T\!-\!\hat{\mathbf{h}}_R\} \\
    \hdashline
    \text{Im}\{\hat{\mathbf{h}}_T\!-\!\hat{\mathbf{h}}_R\}
    \end{array}\right ],
\end{align}
where 
 $\mathbf{A}_1 \triangleq \text{diag}(\boldsymbol{\alpha}) \in \mathbb{C}^{N \times N}$ is a diagonal matrix; 
 \begin{align}
 \boldsymbol{\alpha} \triangleq jZ_0(\hat{\mathbf{h}}_R + \hat{\mathbf{h}}_T) \in \mathbb{C}^{N} \label{eq: alpha}
 \end{align}
 is an auxiliary vector; $\hat{\mathbf{h}}_R\triangleq \mathbf{h}_R/\|\mathbf{h}_R\|_2$ and $\hat{\mathbf{h}}_T\triangleq \mathbf{h}_T/\|\mathbf{h}_T\|_2$ are the normalized (unit-norm) channel vectors; %and $\boldsymbol{\beta} \triangleq \hat{\mathbf{h}}_T - \hat{\mathbf{h}}_R \in \mathbb{C}^{N}$ are auxiliary vector variables; 
%  \begin{align}
% \boldsymbol{\alpha} &\textcolor{red}{\triangleq} jZ_0(\hat{\mathbf{h}}_R + \hat{\mathbf{h}}_T) \in \mathbb{C}^{N}, \ \label{eq: alpha}\\
%     \boldsymbol{\beta} &\textcolor{red}{\triangleq} \hat{\mathbf{h}}_T - \hat{\mathbf{h}}_R \in \mathbb{C}^{N},
% \end{align}
and $\mathbf{A}_2 \in \mathbb{C}^{N \times (N - 1)}$ is a bidiagonal matrix with $(k,l)$th entry
\begin{align}
\label{eq: A2}
    [\mathbf{A}_2]_{k,l} \triangleq \begin{cases}
        \alpha_{k+1} &, \ \text{if} \ k = l \\
        \alpha_{k-1} &, \ \text{if} \ k = l + 1 \\
        0 &, \ \text{otherwise}. 
    \end{cases}
\end{align}
Now, $\mathbf{A}\mathbf{x}=\mathbf{b}$ has a solution if and only if $\mathbf{b}$ is in the range space of $\mathbf{A}$, i.e., $\mathbf{b}\in\mathcal{R}(\mathbf{A})$. However, it is not obvious 
%The question is now: 
for which 
pairs $(\mathbf{h}_R, \mathbf{h}_T)$ vector $\mathbf{b}\in\mathcal{R}(\mathbf{A})$, %$h_R$, $h_T$ $\mathbf{b}$ is 
%in the range space of $\mathbf{A}$, 
since $\mathbf{A}$ is a \emph{tall} matrix, and \emph{both} $\mathbf{A}$ and $\mathbf{b}$ are functions of channel vectors $(\mathbf{h}_R, \mathbf{h}_T)$. %?
It was argued %shown 
in \cite{BDRIS_graphtheory} that if the channel vectors $(\mathbf{h}_R, \mathbf{h}_T)$ are drawn from a continuous probability distribution, then $\mathbf{b} \in \mathcal{R}(\mathbf{A})$ with probability 1. However, this does not exclude the existence of \emph{adversarial} channels for which $\mathbf{b} \notin \mathcal{R}(\mathbf{A})$. 
To characterize the set of adversarial channels, consider a fixed index set $\mathcal{I}=\{i_g\}_{g=1}^{G-1}$ similarly to \cref{sec: disconnected}. Furthermore, define the set of vector tuples
\begin{align}
     \label{eq: C2}
     \begin{aligned}
    \mathcal{C}_2(\mathcal{I})\!\triangleq\!\{ (\mathbf{u},\mathbf{v})%\!\in\!\mathbb{C}^N\!\times \!\mathbb{C}^N
    \, |\, \exists \gamma_i\!\in\!\mathbb{R} \text{ s.t. }[\hat{\mathbf{u}}+&\hat{\mathbf{v}}]_i\!=\! \gamma_i\cdot[\hat{\mathbf{u}}\!+\!\hat{\mathbf{v}}]_{i+1}\\
    &\text{if and only if } i\in\mathcal{I}\},
    \end{aligned}
\end{align}
where $\hat{\mathbf{u}}\triangleq\mathbf{u}/\|\mathbf{u}\|_2,\hat{\mathbf{v}}\triangleq\mathbf{v}/\|\mathbf{v}\|_2 \in\mathbb{C}^N$. 
 It can be shown that if $(\mathbf{h}_R,\mathbf{h}_T)\in\mathcal{C}_2(\mathcal{I})$, then $\mathbf{A}$ does not have full column rank. However, this alone is insufficient for $\bar{P}_R$ to be unachievable. %may be achieved even in this case. 
%\noindent denote the set of vectors with which $\mathbf{A}$ does not have full column rank. 
\begin{comment}
Furthermore, denote the set of \ldots\textbf{[briefly give intuition here]}\ldots as
\begin{align}
     \label{eq: C2}
     \begin{aligned}
        \mathcal{C}_2(\mathcal{I})\!\triangleq\!\{ (\mathbf{u},\mathbf{v})%\!\in\!\mathbb{C}^N\!\times \!\mathbb{C}^N
        \, | \, \nexists \gamma>0 \text{ s.t. } \|&\mathbf{u}_{\mathcal{J}_g(\mathcal{I})}\|_2\!=\!\gamma \|\mathbf{v}_{\mathcal{J}_g(\mathcal{I})}\|_2\\
        &\text{ for some } g\in\{1:G\}\},
     \end{aligned}  
\end{align}
where sets $\mathcal{J}_g(\mathcal{I})$, $g=1,2,\ldots,G$,  partition integers $1,2,\ldots,N$ into $|\mathcal{I}|+1=G$ partially overlapping subsets: \textbf{[Shouldn't these be disjoint?]}
\[
    \mathcal{J}_g(\mathcal{I}) 
    \triangleq
    \begin{cases}
        \{1 : i_1\}, & \text{if } g = 1, \\
        \{i_{g-1} : i_g\}, & \text{if } 1 < g < G, \\
        \{i_{G-1} : N\}, & \text{if } g = G. 
    \end{cases}
\]
\end{comment}
We thus define the set of \emph{adversarial channel vector pairs} as
\begin{align}
    \label{eq: adv_set}
        \begin{aligned}
    \mathcal{A} \triangleq \bigcup_{\emptyset\subset\mathcal{I}\subseteq \{1:N-1\}} \mathcal{C}_1(\mathcal{I})\cap \mathcal{C}_2(\mathcal{I}),
        \end{aligned}
\end{align}
where the union is taken over all $\sum_{k=1}^N\binom{N-1}{k}=2^{N-1}-1$ nonempty subsets of $\mathcal{I} \subseteq \{1,2,\ldots,N-1\}$. %\cref{eq: adv_set} denotes the set of $N$-dimensional vector pairs $(\mathbf{u},\mathbf{v})$ for which $\mathbf{A}$ is column-rank-deficient and no group-connected RIS %with index set $\mathcal{I}$ 
%achieves $\bar{P}_R$. 
The following theorem presents our main result---a novel %necessary and 
sufficient condition characterizing the set of channel vectors for which the tridiagonal BD-RIS architecture cannot achieve %reach the maximum received power in 
\eqref{eq: bound_full}.
\begin{theorem}
    \label{theorem: tree-connected}
    %Assuming a tridiagonal RIS architecture of the form in \eqref{eq: tridiagonal}, and channels $(\mathbf{h}_R, \mathbf{h}_T) \ \text{s.t.} \ \forall i \ [\hat{\mathbf{h}}_T]_i + [\hat{\mathbf{h}}_R]_i \neq 0$: 
    Given %channel vectors $\mathbf{h}_R, \mathbf{h}_T\in\mathbb{C}^N $, %satisfying $[\hat{\mathbf{h}}_T + \hat{\mathbf{h}}_R]_i \neq 0, \forall i$, 
   %and 
   a tridiagonal BD-RIS architecture where $\mathbf{B}$ follows \eqref{eq: tridiagonal}, denoted $\mathbf{B} \in \mathcal{B}_{TC}$, we have:
    \begin{equation}
        \label{eq: theorem}
         (\mathbf{h}_R, \mathbf{h}_T) \in \mathcal{A} \implies \underset{\mathbf{B} \in \mathcal{B}_{TC}}{\max} \ |\mathbf{h}_R^H \mathbf{\Theta} \mathbf{h}_T|^2 < \bar{P}_R.
    \end{equation}
\end{theorem}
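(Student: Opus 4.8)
The plan is to reduce the claim to the consistency of the linear system $\mathbf{A}\mathbf{x}=\mathbf{b}$ and then to analyze the left null space of $\mathbf{A}$. By the reduction recalled around \eqref{eq: A_matrix} (from \cite{BDRIS_graphtheory}), the tridiagonal architecture attains the bound \eqref{eq: bound_full} if and only if $\mathbf{A}\mathbf{x}=\mathbf{b}$ is consistent, i.e., $\mathbf{b}\in\mathcal{R}(\mathbf{A})$. Since $\mathbf{A}$ is real, $\mathcal{R}(\mathbf{A})^{\perp}=\mathcal{N}(\mathbf{A}^{T})$, so it suffices to exhibit, for every $(\mathbf{h}_R,\mathbf{h}_T)\in\mathcal{A}$, a left null vector $\mathbf{w}$ with $\mathbf{w}^{T}\mathbf{b}\neq 0$. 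Because $\mathcal{A}$ is the union in \eqref{eq: adv_set}, I would fix an index set $\mathcal{I}$ and prove the implication for an arbitrary $(\mathbf{h}_R,\mathbf{h}_T)\in\mathcal{C}_1(\mathcal{I})\cap\mathcal{C}_2(\mathcal{I})$; the theorem then follows by taking the union over $\mathcal{I}$.

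The core step is to characterize $\mathcal{N}(\mathbf{A}^{T})$ explicitly. Writing $\mathbf{w}=[\mathbf{p}^{T},\mathbf{q}^{T}]^{T}$ and $w_k\triangleq p_k+jq_k$, orthogonality of $\mathbf{w}$ to the $k$th ``diagonal'' column of $\mathbf{A}$ reads $\mathrm{Re}(\bar{w}_k\alpha_k)=0$, which forces $w_k=jt_k\alpha_k$ for some real $t_k$ (whenever $\alpha_k\neq 0$). Substituting this into the orthogonality condition against the $l$th ``off-diagonal'' column reduces, after simplification, to $(t_l-t_{l+1})\,\mathrm{Im}(\bar{\alpha}_l\alpha_{l+1})=0$ for $l=1,\dots,N-1$. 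Here $\mathrm{Im}(\bar{\alpha}_l\alpha_{l+1})=0$ exactly when $\alpha_l$ and $\alpha_{l+1}$ are real multiples of one another, which by \eqref{eq: alpha} is precisely the proportionality appearing in \eqref{eq: C2}; hence $(\mathbf{h}_R,\mathbf{h}_T)\in\mathcal{C}_2(\mathcal{I})$ yields $\mathrm{Im}(\bar{\alpha}_l\alpha_{l+1})=0\iff l\in\mathcal{I}$. Consequently the scalars $t_k$ must be constant on each group $\mathcal{J}_g(\mathcal{I})$ and are otherwise free, so $\mathcal{N}(\mathbf{A}^{T})$ is spanned by the $G$ ``group-indicator'' vectors $\mathbf{w}^{(g)}$ with $w_k^{(g)}=j\alpha_k$ for $k\in\mathcal{J}_g(\mathcal{I})$ and $0$ otherwise.

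It then remains to evaluate $(\mathbf{w}^{(g)})^{T}\mathbf{b}$. Using $\boldsymbol{\alpha}=jZ_0(\hat{\mathbf{h}}_R+\hat{\mathbf{h}}_T)$ and $\boldsymbol{\beta}=\hat{\mathbf{h}}_T-\hat{\mathbf{h}}_R$, a short calculation gives $\mathrm{Im}(\bar{\alpha}_k\beta_k)=Z_0(|[\hat{\mathbf{h}}_R]_k|^{2}-|[\hat{\mathbf{h}}_T]_k|^{2})$, so that $(\mathbf{w}^{(g)})^{T}\mathbf{b}=Z_0\big(\|[\hat{\mathbf{h}}_R]_{\mathcal{J}_g(\mathcal{I})}\|_2^{2}-\|[\hat{\mathbf{h}}_T]_{\mathcal{J}_g(\mathcal{I})}\|_2^{2}\big)$. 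These inner products vanish for all $g$ if and only if the normalized group energies coincide, i.e., $\|[\mathbf{h}_R]_{\mathcal{J}_g(\mathcal{I})}\|_2=\gamma\|[\mathbf{h}_T]_{\mathcal{J}_g(\mathcal{I})}\|_2$ for all $g$ with the single ratio $\gamma=\|\mathbf{h}_R\|_2/\|\mathbf{h}_T\|_2>0$, which is exactly the negation of membership in $\mathcal{C}_1(\mathcal{I})$. Since $(\mathbf{h}_R,\mathbf{h}_T)\in\mathcal{C}_1(\mathcal{I})$ by assumption, at least one $(\mathbf{w}^{(g)})^{T}\mathbf{b}\neq 0$, whence $\mathbf{b}\notin\mathcal{R}(\mathbf{A})$ and $\bar{P}_R$ is unattainable. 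Taking the union over $\mathcal{I}$ proves the theorem and makes the announced connection precise: the obstruction for the tridiagonal architecture is identical to the Cauchy--Schwartz obstruction of \cref{proposition: disconnected} for the $\mathcal{I}$-group-connected architecture.

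The step I expect to be most delicate is the exact determination of $\mathcal{N}(\mathbf{A}^{T})$, rather than the two essentially routine inner-product computations. In particular I must use the ``if and only if'' in \eqref{eq: C2} to guarantee that $\mathcal{I}$ is \emph{precisely} the set of proportional consecutive indices, so that the groups $\mathcal{J}_g(\mathcal{I})$ are the maximal blocks and $\dim\mathcal{N}(\mathbf{A}^{T})=G$ exactly; a strict subset would yield a finer partition and a different null space. The remaining care concerns the degenerate coordinates with $\alpha_k=0$, where the diagonal-column constraint no longer pins $w_k$ to $jt_k\alpha_k$ and the equivalence $\mathrm{Im}(\bar{\alpha}_l\alpha_{l+1})=0\iff l\in\mathcal{I}$ can break. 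I would handle these by restricting to the generic situation $\alpha_k\neq 0$ for all $k$ (equivalently $[\hat{\mathbf{h}}_R]_k\neq-[\hat{\mathbf{h}}_T]_k$), observing that a vanishing $\alpha_k$ forces $|[\hat{\mathbf{h}}_R]_k|=|[\hat{\mathbf{h}}_T]_k|$ and is therefore consistent with the group-energy bookkeeping above.
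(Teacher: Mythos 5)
Your proposal is correct, but it takes a genuinely different route from the paper. The paper's proof stays on the \emph{column} side: for each $i\in\mathcal{I}$ the real proportionality $\alpha_i=\gamma_i\alpha_{i+1}$ makes the off-diagonal column $\mathbf{c}_{N+i}$ a linear combination of $\mathbf{c}_i$ and $\mathbf{c}_{i+1}$, so the unknown $x_{N+i}=B_{i,i+1}$ is redundant and the tridiagonal system collapses to that of the $\mathcal{I}$-group-connected architecture; infeasibility then follows by invoking \cref{proposition: disconnected}. You instead work in the \emph{left null space}: you construct explicit annihilators $\mathbf{w}^{(g)}$ of $\mathbf{A}$ supported on each group and show $(\mathbf{w}^{(g)})^{T}\mathbf{b}=Z_0\bigl(\|[\hat{\mathbf{h}}_R]_{\mathcal{J}_g(\mathcal{I})}\|_2^{2}-\|[\hat{\mathbf{h}}_T]_{\mathcal{J}_g(\mathcal{I})}\|_2^{2}\bigr)$, which cannot all vanish when $(\mathbf{h}_R,\mathbf{h}_T)\in\mathcal{C}_1(\mathcal{I})$. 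Both arguments pivot on the same structural fact about $\boldsymbol{\alpha}$, but yours is self-contained (it does not route through \cref{proposition: disconnected}) and yields a quantitative certificate of infeasibility, whereas the paper's reduction makes the conceptual link to the group-connected architecture more transparent. Two remarks on economy: you do not need the exact determination $\dim\mathcal{N}(\mathbf{A}^{T})=G$, nor the ``only if'' direction in \eqref{eq: C2}---for sufficiency it is enough that each $\mathbf{w}^{(g)}$ annihilates every column, which uses only $i\in\mathcal{I}\Rightarrow\operatorname{Im}(\bar{\alpha}_i\alpha_{i+1})=0$ together with $t_l=t_{l+1}$ inside groups. Likewise the degenerate coordinates $\alpha_k=0$ are harmless for this one-sided argument, since $w^{(g)}_k=j\alpha_k=0$ then contributes nothing to either the orthogonality conditions or to $(\mathbf{w}^{(g)})^{T}\mathbf{b}$, consistent with $|[\hat{\mathbf{h}}_R]_k|=|[\hat{\mathbf{h}}_T]_k|$; no genericity restriction is required.
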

\begin{proof} 
%For sufficiency and necessity \textcolor{red}{under the assumption\ldots}, 
%For simplicity, we assume throughout the proof that $(\mathbf{h}_R,\mathbf{h}_T)\in\mathbb{C}^N\times \mathbb{C}^N$ such that $[\hat{\mathbf{h}}_T + \hat{\mathbf{h}}_R]_i \neq 0, \forall i$, as stated. 
We will use the following property \cite{BDRIS_graphtheory}: $ |\mathbf{h}_R^H \mathbf{\Theta} \mathbf{h}_T|^2 = \bar{P}_R$ if and only if $\mathbf{A}\mathbf{x}=\mathbf{b}$ has a solution, where $\mathbf{A}$, $\mathbf{x}$, and $\mathbf{b}$ are given by \labelcref{eq: A_matrix,eq: xB}. Specifically, we show that if $(\mathbf{h}_R,\mathbf{h}_T)\in\mathcal{A}$, then $\mathbf{A}\mathbf{x}\neq \mathbf{b}$, i.e., $\underset{\mathbf{B} \in \mathcal{B}_{TC}}{\max} \ |\mathbf{h}_R^H \mathbf{\Theta} \mathbf{h}_T|^2 < \bar{P}_R$.

If $(\mathbf{h}_R,\mathbf{h}_T)\in\mathcal{A}$, there exists $\gamma_i\in\mathbb{R}; i\in\{1:N-1\}$ s.t. $\alpha_i = \gamma_i \alpha_{i+1}$, where $ \boldsymbol{\alpha}$ is defined in \eqref{eq: alpha}. %By separating the real and imaginary parts,
Rewriting $\boldsymbol{\alpha}=\text{Re}(\boldsymbol{\alpha})+j\text{Im}(\boldsymbol{\alpha})$,
we obtain an expression for $\gamma_i$:
\begin{equation}
    \gamma_i = \begin{cases}
        c_i\in\mathbb{R}\setminus\{0\},& \text{if} \ \alpha_{i+1}=\alpha_{i} = 0\\
        \frac{\text{Re}(\alpha_{i+1})}{\text{Re}(\alpha_{i})},& \text{if} \ \text{Im}(\alpha_{i+1}) = \text{Im}(\alpha_{i}) = 0 \\
        \frac{\text{Im}(\alpha_{i+1})}{\text{Im}(\alpha_{i})},& \text{if} \ \text{Re}(\alpha_{i+1}) = \text{Re}(\alpha_{i}) = 0, \\
        \frac{\text{Re}(\alpha_{i+1})}{\text{Re}(\alpha_{i})} =\frac{\text{Im}(\alpha_{i+1})}{\text{Im}(\alpha_{i})}, &  \text{otherwise}.
    \end{cases}
\end{equation}
Note that the $i$th, ($i+1$)th, and ($N+i$)th columns of $\mathbf{A}$ are
{\footnotesize
\begin{align*}
% %\arraycolsep=1.4pt\def\arraystretch{2.2}
% \begin{array}{ccccccccccccc}
% \mathbf{c}_i=&[0&\ldots&0&\text{Re}(\alpha_i)&0&0&\ldots&0&\text{Im}(\alpha_i)&0&0&\ldots&0]^T\\
% \mathbf{c}_{i+1}=&[0&\ldots&0&0&\text{Re}(\alpha_{i+1})&0&\ldots&0&0&\text{Im}(\alpha_{i+1})&0&\ldots&0]^T
% \end{array}%\right ]
\mathbf{c}_i&=[0,\ldots,0,\text{Re}(\alpha_i),0,0,\ldots,0,\text{Im}(\alpha_i),0,0\ldots,0]^T,\\
\mathbf{c}_{i+1}&=[0,\ldots,0,0,\text{Re}(\alpha_{i+1}),0,\ldots,0,0,\text{Im}(\alpha_{i+1}),0\ldots,0]^T,\\
\mathbf{c}_{N+i}&=[0,\ldots,0,\text{Re}(\alpha_{i+1}),\text{Re}(\alpha_i),0,\ldots,0,\text{Im}(\alpha_{i+1}),\text{Im}(\alpha_i),0\ldots,0]^T.
\end{align*}
}%
Hence, %we can express %($N+i$)th column 
$\mathbf{c}_{N+1}$ %of $\mathbf{A}$ as
is a linear combination of %columns 
$\mathbf{c}_i$ and $\mathbf{c}_{i+1}$, 
\begin{equation}
    \mathbf{c}_{N+i} = \gamma_i   \mathbf{c}_i +  \mathbf{c}_{i+1}/\gamma_i ,
\end{equation}
and we can equivalently write
\begin{equation}
    \mathbf{A} \mathbf{x} = \mathbf{A} \mathbf{x}' = \mathbf{b}, 
\end{equation}
where the ($N+i$)th element of $\mathbf{x}'$ is zero, $x_{N+i}' = 0$, and% represent all possible vectors $\mathbf{A} \mathbf{x}$ with the corresponding ($N+i$)th element of $\mathbf{x}$ set to 0: 
\begin{align}
\mathbf{x} &= \left [ \begin{array}{ccccccccc}
     x_1, \ x_2, \dots, \ x_i, \ x_{i+1}, \dots, \ x_{N+i}, \dots, \ x_{2N-1}
\end{array}
\right ]^T, \notag \\
\mathbf{x}' &= \left [ \begin{array}{ccccccccc}
     x_1, \ x_2, \dots, \ x_i', \ x_{i+1}', \dots, \ x_{N+i}', \dots, \ x_{2N-1}
\end{array}
\right ]^T, \notag
\end{align}
as well as $x_i' = x_i + \gamma_ix_{N+i}$ and  $x_{i+1}' = x_{i+1} + x_{N+1}/\gamma_i$.
%where $x_i' = x_i + \gamma_1$,  $x_{i+1}' = x_{i+1} + \gamma_2$ and $x_{N+i}' = 0$. 

By (\ref{eq: xB}), 
$x_{N+i}$ corresponds to element $B_{i,i+i}=B_{i+1,i}$ of tridiagonal matrix $\mathbf{B}$. Hence, $\mathbf{B}$ can be written as a block diagonal matrix, corresponding to a group connected architecture 
%the element $x_{N+i}$ corresponds to the $i$th non-diagonal element of $\mathbf{B}$, and the resulting susceptance matrix can presented as a group-connected one 
with a missing link between elements $i$ and $i+1$: 
\begin{equation}
    \label{eq: diagonal_reduction}
    \mathbf{B}' = \text{diag}(\mathbf{B}'_1, \mathbf{B}'_2),
\end{equation}
Here, $\mathbf{B}'_1\in\mathbb{R}^{i \times i}$ is a tridiagonal symmetric matrix including the $i$ first diagonal elements and $i-1$ non-diagonal elements of $\mathbf{B}$. Similarly $\mathbf{B}'_2\in\mathbb{R}^{(N-i) \times (N-i)}$ includes the last $N-i$ diagonal elements and $N-i-1$ non-diagonal elements of $\mathbf{B}$.

Now, by assumption, $(\mathbf{h}_R, \mathbf{h}_T) \in \mathcal{A}\subseteq \bigcup_{\mathcal{I}}\mathcal{C}_1(\mathcal{I})$, where 
%As we have assumed, that $(\mathbf{h}_R, \mathbf{h}_T) \in \mathcal{A}$, then $(\mathbf{h}_R, \mathbf{h}_T) \in \mathcal{C}_1(\mathcal{I})$, which means that the group-connected architecture with index set $\mathcal{I}$ cannot reach $\bar{P}_R$, and the sufficiency is proven.  
the union is taken over all possible nonempty $\mathcal{I}\subseteq\{1:N-1\}$. Hence, no group-connected architecture can achieve $\bar{P}_R$, which by the preceding reasoning implies that $\mathbf{A}\mathbf{x}=\mathbf{b}$ has no solution. This in turn implies $\underset{\mathbf{B} \in \mathcal{B}_{TC}}{\max} \ |\mathbf{h}_R^H \mathbf{\Theta} \mathbf{h}_T|^2 < \bar{P}_R$. %, and the sufficiency is proven.
\end{proof}

\cref{theorem: tree-connected} reveals that the
\emph{sum} of the normalized Tx-Rx channel vectors $\hat{\mathbf{h}}_R+\hat{\mathbf{h}}_T$ determines a set of adversarial channels for the tridiagonal architecture. %occur when the 
%consist of Tx-Rx channels pairs where  fulfills certain conditions. 
The phases of the entries of complex vector $\hat{\mathbf{h}}_R+\hat{\mathbf{h}}_T$ play an especially important role: when a pair of \emph{adjacent} entries of $\hat{\mathbf{h}}_R+\hat{\mathbf{h}}_T$ are \emph{real} scalar multiples of each other, % certain conditions for adjacent channel elements are fulfilled, making %one of 
the connection between the corresponding RIS elements is redundant, as the proof of \cref{theorem: tree-connected} shows. 
In this case, $\mathbf{B}$ can equivalently be written as a block-diagonal matrix corresponding to a group-connected RIS.
%However, as the loss of one element makes the susceptance matrix into a block-diagonal one, the architecture is effectively a group-connected RIS. T
Hence, $\bar{P}_R$ cannot be achieved if the %can still be achieved if the %channel vectors do not fulfill the requirements for GC-adversarial 
adversarial channel conditions in
%vectors defined in Proposition 
\cref{proposition: disconnected} are satisfied. %for any group-connected architecture. %for any group connected architecture. 
%In conclusion, for the channel vectors to be TC-adversarial, in addition to adjacent element conditions, the channels also have to be GC-adversarial with regards to the corresponding GC-architecture. 
A full characterization of the set of adversarial channels (necessary and sufficient conditions) of the tridiagonal BD-RIS is part of ongoing work.

\begin{comment}
\begin{remark}
    Note that (\ref{eq: linear_dependence3}) is the same condition as (\ref{eq: linear_dependence_GC}). This is due to the fact that when (\ref{eq: fraction_condition}) is satisfied for some index $i$, the corresponding off-diagonal element of $\mathbf{B}$ can be set to 0 making the RIS effectively a group-connected network as shown in (\ref{eq: diagonal_reduction}). Thus, the constraints of a group-connected network apply. If (\ref{eq: fraction_condition}) applies to all $i$, $\mathbf{B}'$ is a diagonal matrix, and the optimality constraints are given by (\ref{eq: linear_dependence_SC}). 
\end{remark}
\end{comment}

\begin{comment}
\bigskip
\begin{remark}
    When (\ref{eq: fraction_condition}) is not satisfied for some $i$, but (\ref{eq: linear_dependence3}) is satisfied, there are infinite number of optimal $\mathbf{\Theta}$s for (\ref{eq: max}) to reach (\ref{eq: bound_full}). In this case, the shape of the channels allow the optimization with the corresponding group-connected system with $i$th element set to zero. 
\end{remark}
\end{comment}

\section{Numerical results}
\label{sec: sims}

In this section, we numerically compare the received power at the Rx when the different considered BD-RIS architectures are deployed in a) Rayleigh fading channels and b) adversarial channels. We generate these vectors for RIS sizes $N = 8,16,...,64$, and conduct 1000 Monte Carlo trials for each $N$. We simulate the tridiagonal tree-connected (TC) architecture and group-connected (GC) architecture with equal group sizes 2 and 4, such that the number of groups is $G = N/2$ and $G = N/4$, respectively. The single-connected (conventional) RIS architecture $G = N$ is also included. %tested for reference. 
The optimal values for $\mathbf{B}$ in case of the tree and group-connected (including single-connected) architectures are found using \cite[Algorithms 1 and 2]{BDRIS_graphtheory}, respectively. The average $P_R$ for each $N$ is plotted with respect to the upper bound in \eqref{eq: bound_full}. 

\cref{fig: GC_tree_random} shows the results for Rayleigh fading channels. The tree-connected model reaches \eqref{eq: bound_full}, which is consistent with \cite{BDRIS_graphtheory}, while the group-connected architecture reaches around 80\% ($G = N/2$) to 90\% ($G = N/4$) of this maximum value as $N$ grows. The single-connected architecture only reaches around 60\% of the maximum showing that while the performance of the group connected is not optimal, it is considerably better than that of the conventional single-connected RIS. 

We also generate channels that are favorable for the group-connected architecture, i.e., channels that do not satisfy the adversarial conditions in \cref{proposition: disconnected} for equal group size $2$. %The results are shown in 
\cref{fig: GC_favorable} shows that, as expected, the tree-connected and group-connected architectures reach the optimum in this case.

\begin{figure}[ht]
  \begin{subfigure}[t]{0.45\columnwidth}
    \centering
    \includegraphics[width=\textwidth]{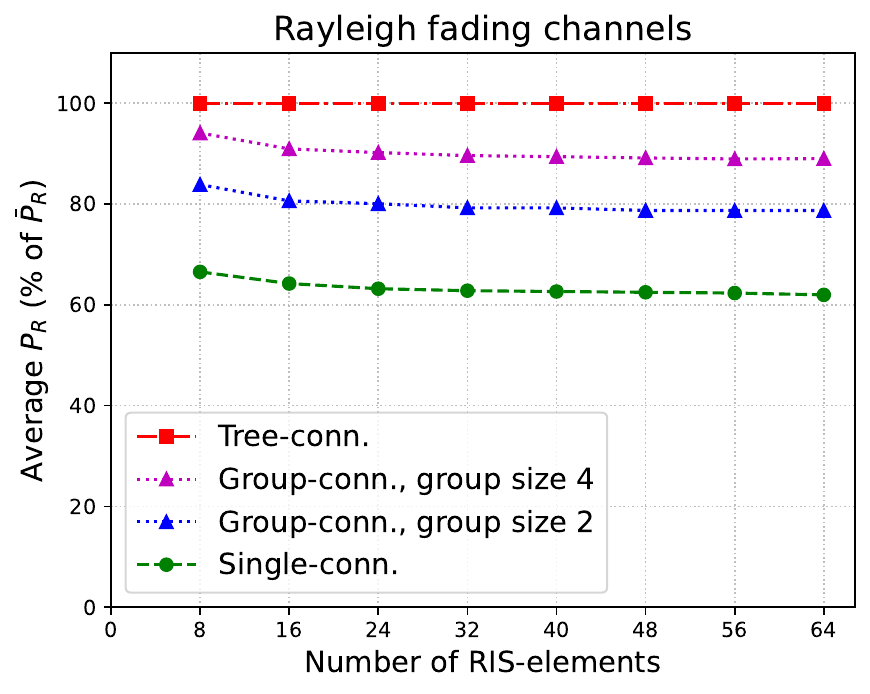}
    \caption {Rayleigh fading channels}
    \label{fig: GC_tree_random}
  \end{subfigure}
  \hfill
  \begin{subfigure}[t]{0.45\columnwidth}
    \centering
    \includegraphics[width=\textwidth]{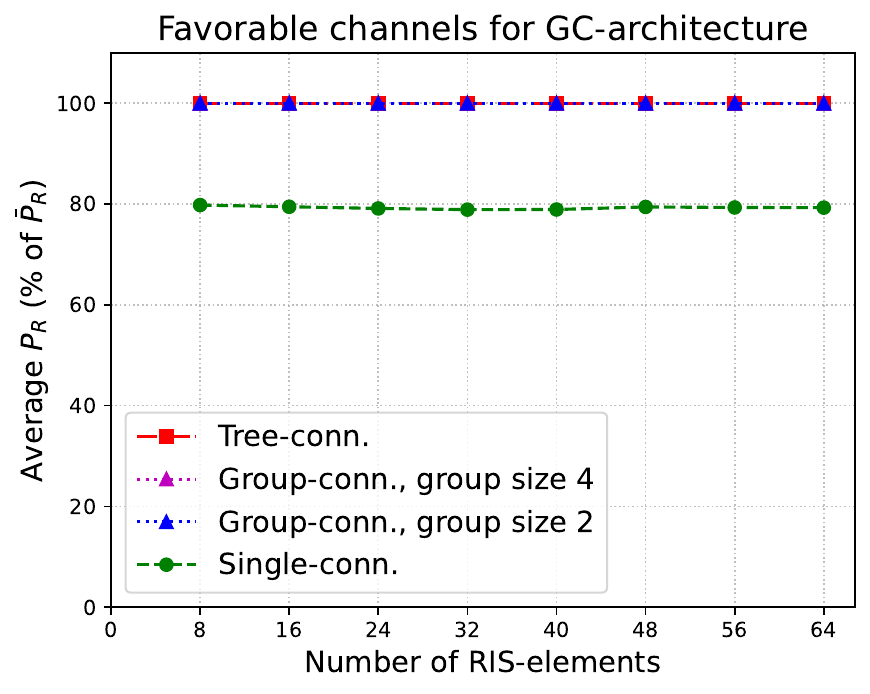}
    \caption{GC-favorable channels}
    \label{fig: GC_favorable}
  \end{subfigure}
  \caption{The TC architecture outperforms GC with Rayleigh fading channels while both achieve the upper bound when the channels are favorable for the GC architecture.}\vspace{-.5cm}
  \label{fig: random_and_favorable}
\end{figure}
%\vspace{-1cm}

Next, we demonstrate that adversarial channel realizations can significantly decrease the received signal power. First we generate $\mathbf{h}_R$ and $\mathbf{h}_T$ following the conditions in Proposition \ref{proposition: disconnected} for equal group size $4$. This is achieved by first independently drawing the real and imaginary parts of each entry of $\mathbf{h}_R$ from a uniform distribution. We then similarly draw $N/4$ random vectors $\{\mathbf{f}_g\}_{g=1}^{N/4}\subseteq \mathbb{C}^4$ and set 
% $\mathbf{h}_T= [a_1 \frac{\mathbf{f}_1^T}{||\mathbf{f}_1||_2},a_2 \frac{\mathbf{f}_2^T}{||\mathbf{f}_2||_2},\ldots,a_{N/4} \frac{\mathbf{f}_{N/4}^T}{||\mathbf{f}_{N/4}||_2}]^T$, 
$\mathbf{h}_T\!=\![a_1 \mathbf{f}_1^T/||\mathbf{f}_1||_2,a_2 \mathbf{f}_2^T/||\mathbf{f}_2||_2,\ldots,a_{N/4} \mathbf{f}_{N/4}^T/||\mathbf{f}_{N/4}||_2]^T$,
where $\{a_g\}_{g=1}^{N/4}$ are independent random variables uniformly distributed between 0 and 1. This ensures that vectors $(\mathbf{h}_R, \mathbf{h}_T)$ do not satisfy \eqref{eq: gc_reaches_upper_bound}. 
%To make sure that the subvectors $([\mathbf{h}_R]_g, [\mathbf{h}_T]_g)$ do not satisfy \eqref{eq: gc_reaches_upper_bound}, we scale each $[\mathbf{h}_T]_g$ as $[\mathbf{h}_T]_g \triangleq a \frac{[\mathbf{h}_T]_{g, \text{initial}}}{||[\mathbf{h}_T]_{g,\text{initial}}||_2}$, where $a \in \mathbb{R}$ is a random variable \textcolor{red}{uniformly distributed} between 0 and 1. 
%The results are shown in 
\cref{fig: GC_adversarial} shows that the group-connected architecture with group size 4 reaches around 70\% and with group size 2 only around 60\% of the maximum received power, which the tree-connected architecture achieves.%The tree-connected architecture still reaches the maximum. 

Finally, we generate adversarial channels for the tree-connected architecture using \cref{theorem: tree-connected}. We consider a Rayleigh fading Rx channel such that $\mathbf{h}_R$ is drawn from a zero mean complex Gaussian distribution. From the normalized $\hat{\mathbf{h}}_R$ we can generate $\hat{\mathbf{h}}_T$ by setting element $[\hat{\mathbf{h}}_T]_i = [\hat{\mathbf{h}}_R]_{i+1}$ and $[\hat{\mathbf{h}}_T]_{i+1} = [\hat{\mathbf{h}}_R]_i$ for 
%$i = 2,4,\ldots, N/2$. 
$i = \bl{1,3,\ldots, Q < N}$. 
This results in channels $\mathbf{h}_R$ and $\mathbf{h}_T$ that satisfy the conditions for not reaching $\bar{P}_R$ in \eqref{eq: theorem}. 
\bl{An example of such a $(\mathbf{h}_R, \mathbf{h}_T)$-pair when $N = 2$ is $\mathbf{h}_R = [2j, 3+j]^T, \ \mathbf{h}_T = \tfrac{1}{2}[3+j, 2j]^T$.} 
\cref{fig: tree_adversarial} shows that, in this case, the tree-connected architecture achieves around 73\% of the maximum, while the group-connected architecture with group size 4 performs on the level of the conventional RIS. 
We emphasize that \emph{no} algorithm for the tree-connected architecture can achieve $P_R=\bar{P}_R$ in this adversarial case. Whether it is possible to significantly improve performance by employing another method than \cite[Algorithm 1]{BDRIS_graphtheory} for optimizing $\boldsymbol{\Theta}$ in such cases is a topic is left for future work. Interestingly, the channel vectors considered in the above example require generating either $\mathbf{h}_R$ or $\mathbf{h}_T$ adversarially, but not both (jointly). Hence, an adversary would need knowledge of one of the channel vectors (which can be arbitrary) and the capability to influence the other adversarially to disrupt the tree-connected BD-RIS.
%The weak performance of the TC-architecture might be due to the fact that Algorithm 1 in \cite{BDRIS_graphtheory} is meant to optimize a tridiagonal RIS in case of non-adversarial channels and it might not be the best performing algorithm in the case of adversarial channels. 
%Developing a better optimization algorithm for adversarial channels is left for future studies. 
\vspace{-.25cm}
\begin{figure}[ht]
  \begin{subfigure}[t]{0.45\columnwidth}
    \centering
    \includegraphics[width=\textwidth]{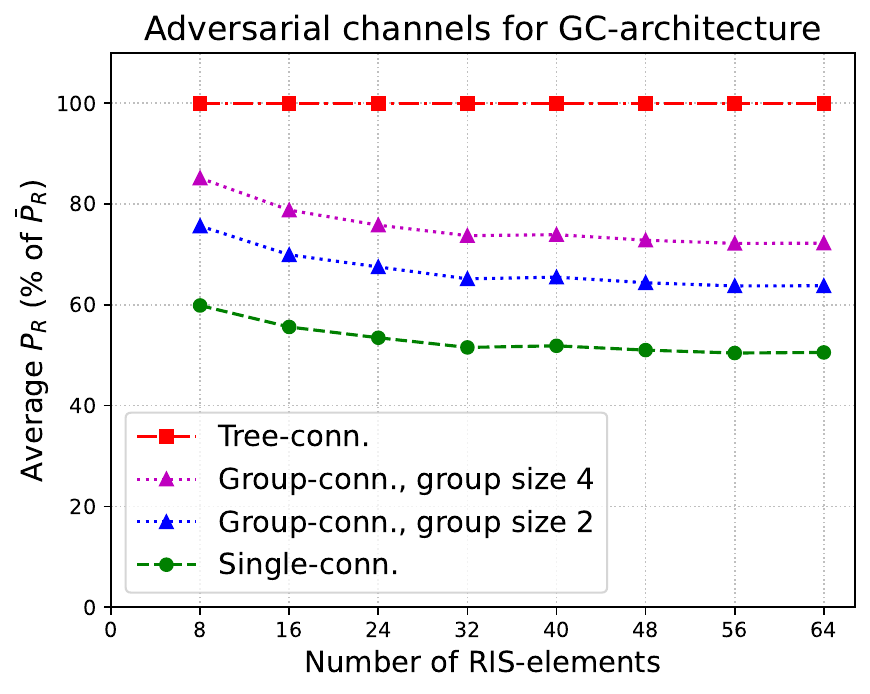}
    \caption{GC-adversarial channels}
    \label{fig: GC_adversarial}
  \end{subfigure}
  \hfill
  \begin{subfigure}[t]{0.45\columnwidth}
    \centering
    \includegraphics[width=\textwidth]{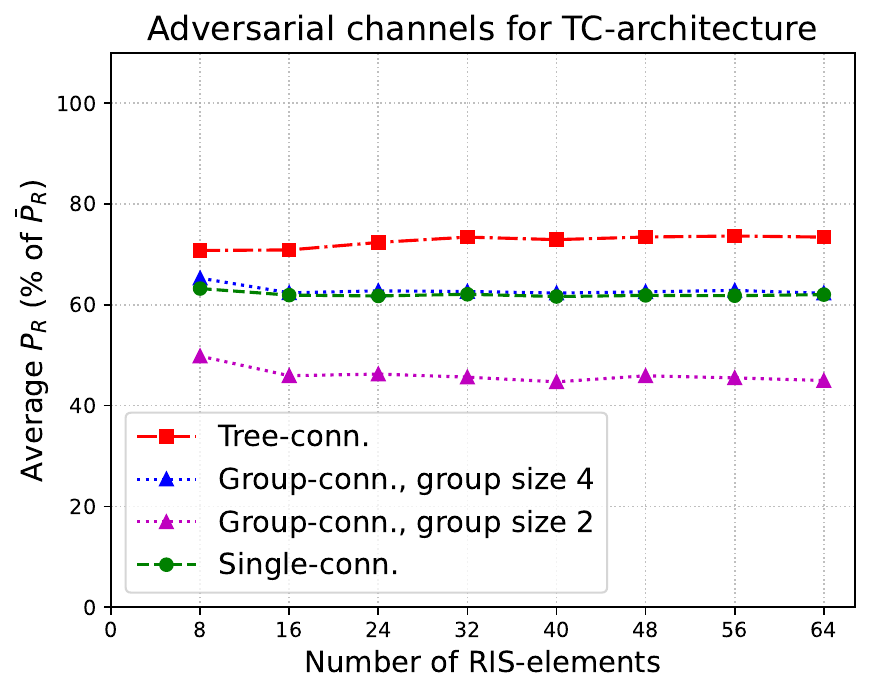}
    \caption{TC-adversarial channels}
    \label{fig: tree_adversarial}
  \end{subfigure}
  \caption{The TC architecture achieves the optimum received power with GC-adversarial channels. With TC-adversarial channels, the TC architecture achieves 70\% of the maximum.}\vspace{-.4cm}
  \label{fig: adversarial}
\end{figure}

\begin{comment}
\textbf{This section should include visualizations of the channels, that do not satisfy the optimality constraints. Also, visualizations of the lower bound given by single-connected RIS and the upper bound given by (\ref{eq: bound_full}). At the moment the channels that don't satisfy the optimality constraints have to be manually generated, so the visualization is difficult. }

As the performance of the tree-connected RIS is equivalent to the performance of a single-connected RIS in the worst case, we should determine what is the performance gap between single- and tree-connected architectures. 

BD-RISs have been shown to gain improvements in average received power compared to conventional RISs with Rayleigh fading channels \cite{BDRIS_graphtheory}, and thus we will consider this kind of a model. Figure \ref{fig: power_gap} shows the numerically defined power gap between the theoretical upper bound and the single-connected architecture. In this scenario, the relative gap seems to remain constant. 

\begin{figure}[!h]
    \centering
    \includegraphics[width=80mm]{figures/power_gap.png}
    \caption{The power gap between the theoretical upper bound and the single connected model with Rayleigh fading}
    \label{fig: power_gap}
\end{figure}
\end{comment}

\section{Conclusions}
\label{sec: conc}
\vspace{-.14cm}

This paper characterized adversarial channels in closed-form for both tree-connected (TC) \bl{(\cref{theorem: tree-connected})}
and group-connected 
\bl{(\cref{proposition: disconnected})} 
BD-RIS architectures. % is adversarial. %satisfies the adversarial channel conditions
%As our simulations show, 
Simulations demonstrated a significant decrease in received signal power---up to 30\% in case of the TC architecture---in such adversarial scattering environments. %show that the received signal power %with tree-connected architecture 
%can decrease 
%(compared to the fully-connected case). %, which significantly deteriorating the performance of the RIS. 
The existence of adversarial channels is important to acknowledge, as they increase %make the 
system vulnerability. %to potential attacks. 
While the probability of encountering such channels in nature may be low, there is potential for intentional or unintentional man-made interference. \bl{This includes} 
coarse quantization \bl{at the RIS, which leads} to a 
finite number of channel \bl{estimates}---a fact that a bad actor might exploit. 
%As \cref{sec: sims} demonstrated, it is sufficient to only tamper one of the channel vectors given the other, making the job of the adversary easier. 
% %\bl{This includes} 
% %including 
% coarse quantization leading to a % %using adversarial channels. 
% %Since real-life channels are also coarsely quantized, there are 
% % finite number of channel \bl{estimates}---a fact that a bad actor might exploit, %.
% % \bl{along with the fact that it only needs to} tamper one of the channel vectors given the other \bl{(cf. \cref{sec: sims})}. 
% %finite number of channel \bl{estimates, which} a bad actor might exploit, %.
% \bl{along with the fact that it only needs to} tamper one of the channel vectors given the other \bl{(cf. \cref{sec: sims})}. 
\bl{Pertinent directions} for future work \bl{are} designing new low-complexity BD-RIS architectures robust to such adversarial propagation conditions, \bl{and investigating the MIMO case}.
%A pertinent direction for future work is designing new low-complexity BD-RIS architectures that are robust to such adversarial propagation conditions.
%
%We have now shown numerically, that the claims in Proposition \ref{proposition: disconnected} and Theorem \ref{theorem: tree-connected} hold. We are able to generate both favorable and adversarial channels for both tree- and group-connected architectures. Most importantly, the simulations show that the adversarial channels can result in a 30\% drop in receiver power for tree-connected case, with the current optimization algorithm. 
\vspace{-.14cm}

\renewcommand\bibpreamble{\vspace{-0.13\baselineskip}}
\bibliography{references_new}

\end{document}